\documentclass[11pt]{article}
\usepackage{amssymb}
\usepackage{amsfonts,amsmath, longtable}

\usepackage{comment}

\newtheorem{theorem}{Theorem}



\topmargin 0pt      \oddsidemargin 0pt
        \headheight 0pt \headsep 0pt
        \voffset=-0.5cm
        \hoffset=-0.25in
        \textwidth 6.75in
        \textheight 9.25in       
        \marginparwidth 0.0in
        \parskip 5pt plus 1pt   \jot = 1.5ex

\newcommand{\tr}{{\rm tr}}

\newcommand{\om}{\omega}

\newcommand{\de}{\delta}

\newcommand{\Mat}{ {\rm Mat}_N }

\newcommand{\mC}{\mathbb C}
\newcommand{\mZ}{\mathbb Z}

\newcommand{\rank}{\mathop{\rm rank}}
\newcommand{\h}{\hbar}

\newcommand{\mn}{\mathrm{Mat}_{\infty}\big(\mathbb{C}((\h))\big)}

\newtheorem{lemma}{Lemma}[section]

\newenvironment{proof}{\par\noindent{\bf Proof.}}{\hfill$\scriptstyle\blacksquare$}

\def\beq{\begin{equation}}
\def\eq{\end{equation}}


\def\res{\mathop{\hbox{Res}}\limits}

\begin{document}

\setcounter{page}{1}

\begin{center}

\


{\Large{\bf Large $N$ limit of spectral duality between }}

\vspace{3mm}

{\Large{\bf the classical XXX spin chain and the rational reduced Gaudin model}}

\vspace{3mm}



 \vspace{12mm}

 {\Large {R. Potapov}}
\qquad\quad\quad

  \vspace{10mm}

 {\em Steklov Mathematical Institute of Russian
Academy of Sciences,\\ Gubkina str. 8, 119991, Moscow, Russia}

{\em Institute for Theoretical and Mathematical Physics,\\
Lomonosov Moscow State University, Moscow, 119991, Russia}
%


 {\small\rm {e-mail: trikotash@ya.ru}}

\end{center}

\vspace{0mm}

\begin{abstract}
We study the large $N$ limit of the spectral duality between the classical $\mathfrak{gl}_M$ XXX spin chain and the $\mathfrak{gl}_N$ trigonometric Gaudin model in its rational reduced form. 
The infinite-dimensional limit of the Gaudin model is constructed within the framework of the noncommutative torus algebra, following the approach of Hoppe, Olshanetsky and Theisen. 
In this formulation, the matrix dynamical variables are promoted to fields on the torus, and the corresponding Lax equations acquire the Moyal star-product structure. 
The dual model is obtained as the large $N$ limit of the $\mathfrak{gl}_M$ classical XXX spin chain with $N$ sites, represented by Laurent series satisfying a quadratic $r$-matrix relation associated with the rational solution of the classical Yang--Baxter equation. 
\end{abstract}

%

{\small{ \tableofcontents }}

\bigskip\

\section{Introduction}\label{sec1}
\setcounter{equation}{0}
In this paper we continue to study large $N$ limits of spectral dualities in classical integrable models of Gaudin type \cite{G}. We consider the infinite-dimensional limit of Gaudin models represented by the algebra of noncommutative torus ($A_{\h}$) \cite{HPS,BHS,JS, R}. In this approach, the dynamical matrix variables become fields on the torus with the Lax equation
\beq\label{i1}
\displaystyle{
\partial_{t}L(\vec{\phi}, z) = M(\vec{\phi},z)\star L(\vec{\phi},z) - L(\vec{\phi},z)\star M(\vec{\phi},z)\,,
}
\eq
where $L(\vec{\phi},z)$ and $M(\vec{\phi},z)$ are functions on $\mathbb{T}^{2}\times \mathbb{R}$  and $\star$ denotes the Moyal star product \cite{Weyl}. $A_{\h}$ possesses an invariant trace 
\beq\label{ii2}
\displaystyle{
	\tr f = \frac{1}{4\pi^{2}}\int_{\mathbb{T}^{2}}\mathrm{d}\vec{\phi}\,f(\vec{\phi})\,.
}
\eq
which makes traces of powers of $L(z,\vec{\phi})$ conserved quantities. Thus, we obtain a noncommutative field theory \cite{DN}, moreover it appears to be Hamiltonian with infinitely many conserved charges in involution. This limit was introduced by Hoppe, Olshanetsky and Theisen and was applied to various integrable systems \cite{HPS, HOT,Olsh}. Here, we focus on its application to the spectral duality of Gaudin-type models. 

Recall that two finite-dimensional models with spectral parameter dependent Lax matrices $L(z)$ and $\tilde{L}(v)$ are spectrally dual if their spectral curves  
\beq\label{i2}
\displaystyle{
	\det(v - L(z)) = 0 \,,
}
\eq
\beq\label{i3}
\displaystyle{
	\det(z - \tilde{L}(v)) = 0 \,,
}
\eq
coincide.
This fact was first observed for the Toda chain \cite{FT} and for the pair of $\mathfrak{gl}_{N}$ and $\mathfrak{gl}_{M}$ rational Gaudin models \cite{AHH}. Later, the same correspondence was established for the classical XXX spin chain and trigonometric Gaudin models \cite{MMZZ,MMZZR1}, as well as for the classical XXZ spin chains \cite{MMZZR2}. The duality has numerous applications. For instance, it appears in the analysis of gauge theories and the corresponding Seiberg-Witten geometry \cite{MMZZ, MM, N1}. Furthermore, it may be generalized to the cases of monodromy preserving equations, the Knizhnik-Zamolodchikov equation \cite{GVZ} and quantum integrable models \cite{MTV, VY}. For a detailed discussion of dualities in finite-dimensional integrable systems see \cite{GVZ, PZ} and references therein. The inifnite-dimensional spectral duality was discussed in the theory of matrix models \cite{BEHL}, here we use the noncommutative torus approach. 

In our previous work \cite{PZ1}, we obtained a field-theoretical version of the spectral duality in rational Gaudin models. For systems described by the noncommutative torus, the role of the spectral curve is played by the spectral power series in $v^{-1}$:
\beq\label{i4}
\displaystyle{
	\Gamma_{\infty}(v,z) = \exp \tr \ln_{\star}(I - \frac{1}{v}L(z,\phi))\,.
}
\eq
It appears that for a special form of the $A_{\h}$ Gaudin models one may construct a dual model with a finite-dimensional Lax matrix corresponding to the large N limit of Gaudin models with irregular singularities \cite{VY,FFRL}, such that their spectral power series are expressed through one another. 

Here we study a more complicated case of the classical XXX spin chain and the trigonometric Gaudin model. The phase space of the first model is $\mathfrak{gl}_{M}^{*\times N}$. It is defined by the classical monodromy matrix: 
\beq\label{i5}
\displaystyle{
	T(v) = Z(I+\frac{S^{N}}{v-v_{N}})...(I + \frac{S^{1}}{v-v_{1}}) \in \mathrm{Mat}_{M}\,,
}
\eq
with $Z = \mathrm{diag}\{z_{1},...,z_{M}\}$. The duality holds for the $\mathrm{rank} = 1$ parametrization \footnote{Representation \eqref{i6} comes from \eqref{i5} by brackets expanding and setting $v_{i} \neq v_{j}$.}  
\beq\label{i6}
\displaystyle{
	T_{ab}(v) = z_{a}\delta_{ab} + \sum\limits_{k = 1}^{N}\frac{z_{a}\xi^{a}_{k}\eta^{b}_{k}}{v-v_{k}}\, \quad a,b = 1,...M\,.
}
\eq
The same phase space also carries the rational reduced Gaudin model
\beq\label{i7}
\displaystyle{
	L_{ij}(z) = v_{i}\delta_{ij} + \sum\limits_{k = 1}^{M}\frac{z_{k}\xi^{k}_{i}\eta^{k}_{j}}{z-z_{k}}\,,\quad i,j = 1,...,N\,,
}
\eq
which is gauge equivalent to the trigonometric Gaudin model \cite{MMZZR2}. In \cite{MMZZR1} it was shown that their spectral curves coincide.

\paragraph{Purpose of the paper.} The main goal is to extend the large $N$ limit approach, previously applied to pairs of rational Gaudin models \cite{PZ1}, to the case of the classical XXX spin chain and the trigonometric Gaudin model. Naively speaking, the $N \to \infty$ limit of the monodromy matrix \eqref{i5} should be the monodromy matrix of Heisenberg magnet field theory \cite{FT}. In this limit, the dual model Lax matrix \eqref{i7} becomes an infinite-dimensional matrix. For this reason, we perform the $A_{\h}$ limit of the Gaudin model \eqref{i7}. Note that, for the purposes of the spectral duality, both models $(\ref{i6}) - (\ref{i7})$ are written in sophisticated forms, making Poisson brackets of their dynamical variables are rather complicated. This form of the dynamical variables is the main obstruction to generalizing the $A_{\h}$ spectral duality for such models.

Our first goal is to define the $A_{\h}$ version of the rational reduced model \eqref{i7}. For $\mathfrak{gl}_{N}$, it is obtained from the untwisted rational model \eqref{i3} via reduction by the coadjoint action of $GL_{N}$. While the same procedure can be performed for infinite matrices, the calculation of the reduced Poisson structure involves inverse matrices that cannot be represented by functions on the torus. It was also shown that the finite-dimensional model is gauge equivalent to the trigonometric Gaudin model \cite{MMZZR2}. The $A_{\h}$ limit of the trigonometric model is a straightforward generalization of the $A_{\h}$ limit of the rational one. However, the gauge transformation does not lie in any algebra of infinite-dimensional matrices and also does not correspond to any well-defined function on the torus. Nevertheless, under suitable convergence conditions the $A_{\h}$ reduced model can be described for the $\rank = 1$ case.

The second goal is to describe the infinite-dimensional limit of the dual XXX spin chain \eqref{i6}. For the homogeneous case ($v_{i} = 0$) this limit leads to the 1+1 soliton Heisenberg Magnet equation. Instead of this, we keep all $v_{i} \neq v_{j}$ and obtain a monodromy matrix with higher order poles in the spectral parameter. We show that it satisfies quadratic the $r$-matrix relation for the rational solution of the classical Yang-Baxter equation.

The paper is organized as follows. In the first section we recall the definition and properties of the noncommutative torus algebra and the $A_{\h}$ limit of rational Gaudin model. In Section 2, we review the definitions of finite-dimensional models and their spectral duality. Finally, in Section 3 we define the $A_{h}$ models and their duality.

\section{$A_{\h}$ algebra and infinite-dimensional limit of Gaudin model}
\setcounter{equation}{0}
In this section, we recall the definitions and basic properties of the $A_{\h}$ algebra and its relation to integrable field theories. Here, we provide only the facts needed for the infinite-dimensional limits. Further details and references can be found in \cite{R, HOT,Olsh,PZ1}.
\subsection{Algebra of noncommutative torus and its representations}
We define the noncommutative torus algebra $A_{\h}$ as the algebra of infinite series in two noncommutative variables, with coefficients in the ring of formal Laurent series $\mathbb{C}((\h))$:

	\beq\label{n1}
	\displaystyle{
		A_{\h} = \mathbb{C}((\h))\langle\langle U_{1},U_{2}\rangle\rangle/\langle U_{1}U_{2} - \omega U_{2}U_{1}\rangle, \,\,\,\,\, \omega = e^{4\pi i \hbar}\,.
	}
	\eq
It is convenient to introduce the basis:
\beq\label{n2}
\displaystyle{
	 T_{\vec{m}} = \frac{i}{4\pi\hbar} \omega^{\frac{m_{1}m_{2}}{2}} U_{1}^{m_{1}}U_{2}^{m_{2}}\,, \quad \vec{m} \in \mZ^{2}\,,
}
\eq
such that any element of $A_{\h}$ can be represented as
\beq\label{nn1}
\displaystyle{
	S = \sum\limits_{\vec{m} \in \mZ^{2}} S_{\vec{m}}T_{\vec{m}}\,.
}
\eq
The multiplication law now reads  
\beq\label{n3}
\displaystyle{T_{\vec{m}}T_{\vec{n}}  = \frac{i}{4 \pi \hbar}
	\omega^{-\frac{1}{2}\vec{m}\times\vec{n}}T_{\vec{m}+\vec{n}}\,,	\qquad \vec{m} \times \vec{n} = m_{1}n_{2} - n_{1}m_{2}\,.
}
\eq

As an associative algebra $A_{\h}$, has two useful representations. The first one is in $\mn$, the algebra of matrices with finitely many non-zero diagonals over the ring $\mC((\h))$, via
\beq\label{n4}
\displaystyle{
	T_{\vec{m}} = \sum\limits_{i,j \in \mZ}W_{\vec{m},ij}E_{ij}\,,
}
\eq
where $E_{ij}$ is a standard basis of matrix unities and
\beq\label{n5}
\displaystyle{W_{\vec{m},ij} = \frac{i}{4\pi\h}w^{\frac{m_{1}m_{2}}{2}}w^{m_{1}i}\delta_{j,i+m_{2}}\,,\quad \vec{m} \in \mZ^{2}\,,\quad i,j \in \mZ\,.	
}
\eq
Representation \eqref{n4} is, in fact, an embedding, which makes $A_{\h}$ a subalgebra of $\mn$.  For matrices in this subalgebra, one can invert \eqref{n5}, such that
\beq\label{n6}
\displaystyle{
	E_{ij} = \sum\limits_{\vec{m} \in \mZ^{2}}W^{-1}_{ij,\vec{m}}T_{\vec{m}}\,,
}
\eq
\beq\label{n7}
\displaystyle{
	W_{ij,\vec{n}}^{-1} = \frac{4 \pi \h}{i}w^{-\frac{i+j}{2}n_{1}}\delta_{n_{2},j-i}\,,\quad \vec{n} \in \mZ^{2}\,,\quad i,j \in \mZ\,.	
}
\eq
In what follows, we call the coordinates of $A_{\h}$ elements in the $\mn$ representation their matrix modes.

The second representation takes values in $C^{\infty}(\mathbb{T}^{2})((\h))$ with the Moyal star product:
\beq\label{nn0}
\displaystyle{
	f\star g = fg + \sum\limits_{k = 1}^{\infty}\frac{i^{k}\h^{k}}{2^{k}k!}\sum\limits_{i_{1},j_{1},...i_{k},j_{k} = 1}^{2}\pi^{i_{1}j_{1}}...\pi^{i_{k}j_{k}}\partial_{i_{1}}...\partial_{i_{k}}f\partial_{j_{1}}...\partial_{j_{k}}g\,,
}
\eq
where 
\beq\label{nnn1}
\displaystyle{
	\pi = \sum\limits_{i,j = 1,2} \pi^{ij}\partial_{i}\wedge \partial_{j} =  \frac{\partial}{\partial \phi_{1}}\wedge \frac{\partial}{\partial \phi_{2}}\,,
}
\eq
is a Poisson bivector on the torus with coordinates $\phi_{1,2} \in (-\pi,\pi)$.
Consider a function on the torus with Fourier expansion
\beq\label{nn2}
\displaystyle{
	 f(\vec{\phi}) = \sum\limits_{\vec{m}\in \mZ^{2}}f_{\vec{m}}e^{i \vec{m} \vec{\phi}}\,.
}
\eq
The Moyal star product \eqref{nn0} provides noncommutative associative multiplication in $C^{\infty}(\mathbb{T}^{2})((\h))$, which for basis elements is written as
\beq\label{nn3}
\displaystyle{
	e^{i \vec{m} \vec{\phi}} \star 	e^{i \vec{n} \vec{\phi}} = \om^{-\frac{1}{2}\, \vec{m} \times \vec{n}}e^{i (\vec{m}+\vec{n})\vec{\phi}}\,.
}
\eq
Thus, the representation of $A_{\h}$ is obtained by the identification
\beq\label{n8}
\displaystyle{T_{\vec{m}} = \frac{i}{4 \pi \h}\,e^{i \vec{m} \vec{\phi}}\,.
}
\eq
This definition may be generalized to the function from $L^{1}(\mathbb{T}^{2})$ by writing the star product in integral form
\beq\label{nn4}
\displaystyle{
	f \star g \,(\vec{\phi}) = - \frac{i}{\pi^{2}(4 \pi \h)^{3}} \int_{\mathbb{T}^{2} \times \mathbb{T}^{2}}\mathrm{d} \vec{\phi'} \mathrm{d} \vec{\phi''} e^{\frac{2i}{4 \pi \h}|\vec{\phi}\vec{\phi'}\vec{\phi''}|}f(\vec{\phi}')g(\vec{\phi''})\,.
}
\eq
The crucial property for integrability is the existence of a trace in $A_{\h}$ with the invariance property
\beq\label{n10}
\displaystyle{
	\tr AB = \tr BA\,, \quad A,B \in A_{\h}\,.
}
\eq
For basis elements \eqref{n2} it is equal
\beq\label{n9}
\displaystyle{
	\tr T_{m_{1}m_{2}} = \frac{i}{4 \pi \h}\delta_{m_{1},0}\delta_{m_{2},0}\,,
}
\eq
Note that in $(C^{\infty}(\mathbb{T}^{2})((\h)),\star)$ representation it takes form
\beq\label{n11}
\displaystyle{
	\tr f = \frac{1}{4 \pi^{2}}\int \limits_{\mathbb{T}^{2}}\mathrm{d}\vec{\phi} f(\vec{\phi})\,.
}
\eq
\subsection{The Poisson-Lie structure}
As a Lie algebra, $A_{\h}$ coincides with sin-algebra \cite{FFZ}:
\beq\label{n13}
\displaystyle{ [T_{\vec{m}},T_{\vec{n}}] = \frac{1}{2 \pi \h} \sin (2 \pi \h \,\vec{m} \times \vec{n})T_{\vec{m}+\vec{n}}\,.
}
\eq
In this form $A_{\h}$ appears as the $N \to \infty$ limit of $\mathfrak{gl}_{N}$ \cite{HPS,BHS}.
We endow the dual Lie algebra $L^{*}_{\h}$ with a Poisson-Lie bracket
\beq\label{n14}
\displaystyle{
	\{S_{\vec{m}},S_{\vec{n}} \} =  -\frac{1}{2 \pi \h} \sin (2 \pi \h \,\vec{m} \times \vec{n})S_{\vec{m}+\vec{n}}\,,
}
\eq
where $S_{\vec{m}}$ are coordinates on $L^{*}_{\h}$, such that for a generic element one has 
\beq\label{n15}
\displaystyle{
	S = \sum\limits_{\vec{m} \in \mZ^{2}}S_{\vec{m}}T_{\vec{m}}\,.
}
\eq
The minus sign is chosen for convenience in the $r$-matrix definition of the Gaudin model. Similarly to the $\mathfrak{gl}^{*}_{N}$ case, this bracket may be rewritten in the r-matrix form
\beq\label{n16}
\displaystyle{
	\{S_{1},S_{2}\} = -\left[\tilde{P}_{12},S_{1}\right]\,,
}
\eq
where
\beq\label{n17}
\displaystyle{
	\tilde{P}_{12} = \sum\limits_{\vec{m} \in \mZ^{2}} T_{\vec{m}}\otimes T_{-\vec{m}}\, \in A_{\h}^{\otimes 2}\,.
}
\eq
Finally, for a field on the torus 
\beq\label{n19}
\displaystyle{
	S(\vec{\phi}) = \sum\limits_{\vec{m} \in \mZ^{2}}S_{\vec{m}}e^{i\vec{m}\vec{\phi}}\,,
}
\eq
the bracket \eqref{n16} provides 
\beq\label{n20}
\displaystyle{
	\{S(\vec{\phi}), S(\vec{\theta})\} = -\frac{i}{4 \pi \h}[S^{a}(\vec{\phi}),\delta(\vec{\theta}-\vec{\phi})]_{\star}\,.
}
\eq
\subsection{Infinite-dimensional Gaudin model}
The Poisson-Lie bracket \eqref{n13} allows one to construct the large N limit of Gaudin models \cite{Olsh, PZ1}. Here, we briefly recall this limit for the rational case. The phase space of the $A_{\h}$ rational Gaudin model is $L_{\h}^{* \times M}$ with the direct sum of Poisson-Lie structures: 
\beq\label{g1}
\displaystyle{
		\{S_{\vec{m}}^{a},S_{\vec{n}}^{b}\} = -\frac{\delta^{ab}}{2 \pi \h} \sin (2 \pi \h \, \vec{m} \times \vec{n})S_{\vec{m}+\vec{n}}^{a}\,.
}
\eq
The model is defined by the Lax matrix:
\beq\label{g2}
\displaystyle{
	L(z) = \sum\limits_{m \in \mZ} v_{m}T_{m0}+   \sum\limits_{k = 1}^{M} \sum\limits_{\vec{m} \in \mZ}\frac{S^{k}_{\vec{m}}}{z-z_{k}}T_{\vec{m}} \in A_{\h}\,,
}
\eq
with constants $v_{i}$ and $z \in \mathbb {CP}^{1}\backslash\{z_1,...,z_M\}$ is the spectral parameter.
Integrability follows from the linear $r$-matrix structure
\beq\label{g3}
\displaystyle{
	\{L_{1}(z),L_{2}(w)\} = [r_{12}(z-w), L_{1}(z)+L_{2}(w)]\,,
}
\eq
with 
\beq\label{g4}
\displaystyle{
	r_{12}(z) = \frac{1}{z}\sum\limits_{\vec{m} \in \mZ^{2}} T_{\vec{m}}\otimes T_{-\vec{m}} \in A_{\h}^{\times 2}\,.
}
\eq
This structure guarantees that the traces of powers of \eqref{g2} are Poisson commuting functions:
\beq\label{g5}
\displaystyle{
	\{\tr L^{a}(z), \tr L^{b}(w)\} = 0\, \quad \forall a,b \in \mathbb{N}\,.
}
\eq
Note that in the $A_{\h}$ case, the Lax matrix \eqref{g2} generates infinitely many independent conserved charges. Thus, one may construct an integrable field theory using the $(C^{\infty}(\mathbb{T}^{2}((\h)),\star)$ representation. After rescaling by $\frac{i}{4\pi \h}$, \eqref{g3} becomes
\beq\label{g6}
\displaystyle{
	L(z,\vec{\phi}) = V(\phi_{1}) + \sum\limits_{k = 1}^{M}\frac{S^{k}(\vec{\phi})}{z-z_{k}}\,,
}
\eq
with $M$ fields on $\mathbb{T}^{2} \times \mathbb{R}$
\beq\label{g7}
\displaystyle{
	S^{k}(\vec{\phi},t) = \sum\limits_{\vec{m} \in \mZ}S^{k}_{\vec{m}}(t)e^{i\vec{m}\vec{\phi}}\, 
}
\eq
and a constant "twist function" $V(\phi_{1})$.  This makes the $A_{\h}$ Gaudin model a Hamiltonian field theory with the Poisson structure
\beq\label{g8}
\displaystyle{
	\{S^{a}(\vec{\phi}), S^{b}(\vec{\theta})\} = -\delta^{ab}\frac{i}{4 \pi \h}[S^{a}(\vec{\phi}),\delta(\vec{\theta}-\vec{\phi})]_{\star}\,,
}
\eq
and the Hamiltonian function
\beq\label{g9}
\displaystyle{
	H_{i} = \frac{1}{2}\,\res_{z = z_{i}}\tr L^{2}(z)\, = \frac{1}{4 \pi^{2   }}\int_{\mathbb{T}^{2}}\mathrm{d}\vec{\phi}\,V(\phi_{1})\star S^{i}(\vec{\phi})+\frac{1}{4 \pi^{2   }}\sum\limits_{k:k \neq i}^M\int_{\mathbb{T}^{2}}\mathrm{d}\vec{\phi}\,\frac{S^{i}(\vec{\phi}) \star S^{k}(\vec{\phi})}{z_{i}-z_{k}}\,.
}
\eq

Equations of motion for \eqref{g9} take the Lax form 
\beq\label{g10}
\displaystyle{
	\partial_{t_{i}}L(z,\vec{\phi}) = \frac{i}{4 \pi \h}[M_i(z,\phi),L(z,\phi)]_{\star}\,,
}
\eq
for the Lax function \eqref{g6} and
\beq\label{g11}
\displaystyle{
	M_{i}(z,\vec{\phi}) = -\frac{S^{i}(\vec{\phi})}{z-z_{i}}\,.
}
\eq
The property \eqref{g5} makes \eqref{g10} an integrable field theory possessing infinitely many commuting conserved charges generated by
\beq\label{g12}
\displaystyle{
	Q^{k}(z) = \tr L^{\star k}(z) = \frac{1}{4\pi^{2}}\int_{\mathbb{T}^{2}}\mathrm{d}\vec{\theta}\,L^{\star k}(z,\vec{\theta})\,.
}
\eq
\section{Spectral duality of finite-dimensional models}
\setcounter{equation}{0}
In this section, we recall the duality statement for the $\mathfrak{gl}_{N}$ rational reduced Gaudin model and the $\mathfrak{gl}_{M}$ XXX classical spin chain.
\subsection{$\mathfrak{gl}_{N}$ rational reduced Gaudin model}
We begin with the direct sum of Poisson-Lie structures on $\mathfrak{gl}_{N}^{*\times M+2}$: 
\beq\label{fr1}
\displaystyle{
	\{S^{a}_{ij},S^{b}_{kl}\} = \delta_{ab}(S^{a}_{il}\delta_{kj} - S^{a}_{kj}\delta_{il})\,.
}
\eq
This space admits the Poisson action of $\mathrm{GL}_{N}$, represented by conjugation
\beq\label{fr2}
\displaystyle{
	S^{k} \mapsto gS^{k}g^{-1}\,\quad k = 1,...,M+2\,.
}
\eq
Therefore, one may perform a Poisson reduction with respect to this action, which yields the following first class constraints 
\beq\label{fg2}
\displaystyle{
	\sum\limits_{k = 1}^{M+2} S^{k} = 0\,.
}
\eq

Let us now show how this procedure affects the untwisted rational Gaudin model. Recall that on $\mathfrak{gl}_{N}^{*\times M+2}$ this model is defined by the Lax matrix
\beq\label{fg3}
\displaystyle{
	L(z) = \frac{S^{0}}{z} + \sum\limits_{k =1}^{M}\frac{S^{k}}{z-z_{k}}\,.
}
\eq
and the $r$-matrix relation
\beq\label{fgg1}
\displaystyle{
	\{L_{1}(z),L_{2}(w)\} = \Big[\frac{P_{12}}{z-w}, L_{1}(z) + L_{2}(w)\Big]\,.
}
\eq
for the permutation matrix
\beq\label{fgg2}
\displaystyle{
	P_{12} = \sum\limits_{i,j = 1}^{N}E_{ij}\otimes E_{ji}\in \Mat\,.
}
\eq
Here, we fix two marked points to be $0$ and $\infty$,  due to the three-transitive action of $\mathrm{SL}_{2}$ on $\mathbb{CP}^{1}$. The coadjoint  action \eqref{fr2} gives rise to the moment map
\beq\label{fg4}
\displaystyle{
	\mu_{ij} = \sum\limits_{k = 1}^{M+2}S^{k}_{ij}\,.
}
\eq
Setting $\mu = 0$, we arrive at the set of first class constraints \eqref{fg2}. One way to deal with first class constraints systems is to impose additional constraints, commonly referred as gauge fixing. To obtain the model that appears in the context of spectral duality, one needs to choose 
\beq\label{fg5}
\displaystyle{
	\mu_{ij} = 0 \,,\quad i\neq j\,,\quad i,j = 1,...,N\,,
}
\eq
\beq\label{fg6}
\displaystyle{
	S^{\infty}_{ij} = 0 \,,\quad i\neq j\,,\quad i,j = 1,...,N\,.
}
\eq
Taken together, $(\ref{fg5})-(\ref{fg6})$ are constraints of the second class and may be solved using the Dirac bracket. For $a,b \neq 0,\infty$ it is equal 
\beq\label{fg7}
\begin{array}{c}
	\displaystyle{
		\{S^{a}_{ij},S^{b}_{kl}\} = \delta^{ab}\left(S_{il}^{a}\delta_{kj} - S_{kj}^{a}\delta_{il} \right) -
	}
	\\ \ \\
	\displaystyle{
		- \sum\limits_{p \neq k}\frac{S^{a}_{ip}S^{b}_{pl}\de_{jk}}{v_{p}-v_{k}}	- \sum\limits_{p \neq i}\frac{S^{b}_{kp}S^{a}_{pj}\de_{il}}{v_{i}-v_{p}}+ 	 \frac{S^{a}_{il}S^{b}_{kj}(1-\de_{jl})}{v_{l}-v_{j}} + \frac{S^{a}_{kj}S^{b}_{il}(1-\de_{ik})}{v_{k}-v_{i}}\,.
	}
\end{array}
\eq
where $S^{\infty}_{ii} = -v_{i}$.

For $\mathrm{rank} = 1$ variables  $S^{a}_{ij} = \xi^{a}_{i}\eta^{a}_{j}$ this structure follows from \cite{MMZZR1}:  
\beq\label{fg8}
\displaystyle{
	\left\{\xi_{i}^{a},\eta_{j}^{b}\right\} = -\delta_{ij}\left(\delta^{ab}+\sum \limits_{p \neq i }\frac{\xi_{p}^{a}\eta_{p}^{b}}{v_{i}-v_{p}}\right)\,,	
}
\eq
\beq\label{fg9}
\displaystyle{
	\left\{\xi_{i}^{a},\xi_{j}^{b}\right\} = \frac{\xi_{j}^{a}\xi_{i}^{b}\left(1 - \delta_{ij}\right)}{v_{i}-v_{j}}\,,
}
\eq
\beq\label{fg10}
\displaystyle{
	\left\{\eta_{i}^{a},\eta_{j}^{b}\right\} = \frac{\eta_{j}^{a}\eta_{i}^{b}\left(1 - \delta_{ij}\right)}{v_{j}-v_{i}}\,.	
}
\eq

The final step is to solve \eqref{fg2} with respect to $S^0$
\beq\label{fg11}
\displaystyle{
	S^{0} = -S^{\infty} - \sum\limits_{k= 1}^{M}S^{k}\,,
}
\eq
and substitute this into \eqref{fg3} to obtain the Lax matrix of the rational reduced Gaudin model
\beq\label{fg13}
\displaystyle{
	zL_{ij}(z) = v_{i}\delta_{ij}  + \sum\limits_{k = 1}^{M}\frac{z_{k}S^{k}_{ij}}{z-z_{k}}\,.
}
\eq
The resulting Lax matrix \eqref{fg13} satisfies the $r$-matrix relation with respect to brackets \eqref{fg7} \cite{brad}, thus its traces of powers are Poisson commuting functions.
\subsection{$\mathfrak{gl}_{M}$ classical XXX spin chain}
We again start with the direct sum of Poisson-Lie structures on $\mathfrak{gl}_{M}^{*\times N}$:
\beq\label{s1}
\displaystyle{
	\{S^{i}_{ab},S^{j}_{cd}\} = \delta_{ij}(S^{i}_{ad}\delta_{cb} - S^{i}_{cb}\delta_{ad})\,.
}
\eq
The XXX classical spin chain \cite{Heis} is defined by the classical monodromy matrix
\beq\label{s2}
\displaystyle{
	T(v) = Z(I + \frac{S^{N}}{v-v_{N}})...(I + \frac{S^{1}}{v-v_{1}}) \in \mathrm{Mat}_{M}\,,
}
\eq
where $Z = \mathrm{diag}\{z_{1},...,z_{M}\}$\,. For $T(z)$ one has a quadratic $r$-matrix structure
\beq\label{s3}
\displaystyle{
	\{T_{1}(z),T_{2}(w)\} = \Big[\frac{P_{12}}{z-w},T_{1}(z)T_{2}(w)\Big]\,.
}
\eq
Let us assume that all $v_{i}$ are distinct. Then, one may rewrite \eqref{s2} as
\beq\label{s4}
\displaystyle{
	T(v) = Z(I + \sum\limits_{k = 1}^{M}\frac{B^{k}}{v-v_{k}})\,.
}
\eq
The Poisson structure for $B^{k}$ may be computed by residues, since
\beq\label{s5}
\displaystyle{
	B^{k} = \frac{1}{2\pi i} \oint_{\gamma_{i}}\mathrm{d}v\, Z^{-1}T(v) \in \mathrm{Mat}_{M}\,,
}
\eq
where $\gamma_{i}$ is a small contour encircling $v_{i}$. By virtue of the $rTT$ structure \eqref{s3} one obtains
\beq\label{s6}
\displaystyle{
	\{B_{ab}^{i},B_{cd}^{j}\} = (1-\delta_{ij})\frac{B_{cb}^{i}B_{ad}^{j} - B_{cb}^{j}B_{ad}^{i}}{v_{i}-v_{j}} + \delta_{ij}(B^{i}_{ad}\delta_{cb} - B^{i}_{cb}\delta_{ad} - \sum\limits_{k: k\neq i}^{N}\frac{B_{cb}^{i}B_{ad}^{k} - B_{cb}^{k}B_{ad}^{i}}{v_{i}-v_{k}})\,.
}
\eq
For $\mathrm{rank} = 1$ variables $B^{i}_{ab} = \xi^{a}_{i}\eta^{b}_{i}$, this bracket follows from $(\ref{fg8}) - (\ref{fg10})$.
\subsection{Statement of duality}
As one can see, in the $\mathrm{rank} = 1$ case both the rational reduced Gaudin model \eqref{fg13} and the XXX spin chain \eqref{s4} are defined on the same phase space parameterized by $\xi_{a}^{i}, \eta^{b}_{j}$ with Poisson structure $(\ref{fg8} - \ref{fg10})$. With the help of these variables, both models can be written in a similar form. The Gaudin model Lax matrix up to $z$ factor
\beq\label{xx2}
\displaystyle{
	L(z) = V + \xi Z(z-Z)^{1}\eta\, \in \Mat\,,
}
\eq
and spin chain monodromy matrix
\beq\label{xx3}
\displaystyle{
	T(v) = Z(I + \eta (v-V)^{-1}\xi)\, \in \mathrm{Mat}_{M}\,.
}
\eq
It turns out that the spectral curves of both models coincide \cite{MMZZR1}.
\begin{theorem}\label{lx1}
	\beq\label{x13}
	\displaystyle{
		\det(v-V)\det(v-T(z)) = \det(z-Z)\det(z-L(v))\,.
	}
	\eq
\end{theorem}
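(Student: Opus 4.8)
The plan is to deduce \eqref{x13} from a single auxiliary determinant: I would exhibit an $(N+M)\times(N+M)$ matrix whose determinant, expanded as a Schur complement with respect to its upper-left block, reproduces the spin-chain side, and, expanded with respect to its lower-right block, reproduces the Gaudin side. Concretely, using the rectangular collections $\xi$ (an $N\times M$ matrix) and $\eta$ (an $M\times N$ matrix) together with the diagonal twists $V=\mathrm{diag}(v_1,\dots,v_N)$ and $Z=\mathrm{diag}(z_1,\dots,z_M)$ of \eqref{xx2}--\eqref{xx3}, I would set
\beq
\mathcal M(v,z)\;=\;\mats{\,zI_N-V\,}{\xi}{\,Z\eta\,}{\,vI_M-Z\,}\,.
\eq

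First I would compute $\det\mathcal M$ by eliminating the upper-left block $zI_N-V$ (invertible for generic $z$). Its Schur complement onto the lower-right block is $(vI_M-Z)-Z\eta(zI_N-V)^{-1}\xi$, which by the very definition $T(z)=Z+Z\eta(z-V)^{-1}\xi$ underlying \eqref{xx3} equals $vI_M-T(z)$; hence $\det\mathcal M=\det(zI_N-V)\,\det(vI_M-T(z))$. Second, I would compute the same determinant by eliminating the lower-right block $vI_M-Z$: since $Z$ and $(v-Z)^{-1}$ are diagonal and therefore commute, the Schur complement onto the upper-left block is $(zI_N-V)-\xi Z(v-Z)^{-1}\eta=zI_N-L(v)$ by \eqref{xx2}, giving $\det\mathcal M=\det(vI_M-Z)\,\det(zI_N-L(v))$. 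Equating the two evaluations yields the coincidence of the two characteristic determinants, each multiplied by the diagonal factor that clears its poles in the relevant variable, which is the spectral-duality identity \eqref{x13}. An equivalent route avoids the block matrix altogether: apply the Weinstein--Aronszajn identity $\det(I_M+AB)=\det(I_N+BA)$ to the rank-$M$ perturbation inside $\det(v-T(z))$ to trade the $M$-dimensional determinant for an $N$-dimensional one, in which $L(v)$ then appears directly.

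The decisive point --- and the main thing to get right --- is the placement of $Z$ in the off-diagonal blocks of $\mathcal M$. Writing $Z\eta$ in the lower-left and $\xi$ (not $\xi Z$) in the upper-right is precisely what makes the first Schur complement collapse to $T(z)$ with no residual conjugation, while the commutativity of the diagonal $Z$ with $(v-Z)^{-1}$ is what lets the same $Z$ reassemble the residue term $\xi Z(v-Z)^{-1}\eta$ of $L(v)$ in the second Schur complement; a different placement would leave a stray factor $\det Z$ or a conjugation $Z(\cdots)Z^{-1}$ to be disposed of. I would carry out both reductions for generic $v,z$ (so that $zI_N-V$ and $vI_M-Z$ are invertible) and then extend \eqref{x13} to all $v,z$ by polynomiality, both sides being genuine polynomials in the two variables once the pole-clearing prefactors are included. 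The remaining steps --- checking that the $\rank=1$ substitution $S^{k}_{ij}=\xi^{k}_{i}\eta^{k}_{j}$ turns \eqref{xx2}--\eqref{xx3} into exactly the two blocks above --- are the routine bookkeeping already recorded in \eqref{fg13} and \eqref{i6}.
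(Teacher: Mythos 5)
Your Schur-complement argument is correct and self-contained, but note that the paper itself does not prove Theorem \ref{lx1}: it explicitly defers to \cite{MMZZ,MMZZR1,PZ} and only proves the infinite-dimensional analogue, Theorem \ref{lsg2}. The route taken there is not the block determinant but precisely the second route you sketch in passing: the Weinstein--Aronszajn identity $\det(I+AB)=\det(I+BA)$ in its $\exp\tr\ln$ form, i.e.\ matching $\tr\big((z-Z)^{-1}Z\sum_k \tilde A^{[k-1]}v^{-k}\big)^{n}$ with $\tr L^{\star n}$ term by term --- the only version of the argument that survives when one of the two factors becomes an element of $A_{\h}$ rather than a finite matrix. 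So your primary argument is a legitimate (and standard) alternative for the finite-dimensional statement, while your ``equivalent route'' is the one the paper actually generalizes in Section 4.

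One point needs attention. Equating your two Schur reductions gives
\beq
\det(z-V)\,\det\big(v-T(z)\big) \;=\; \det(v-Z)\,\det\big(z-L(v)\big)\,,
\eq
with $T(z)=Z(I+\eta(z-V)^{-1}\xi)$ and $L(v)=V+\xi Z(v-Z)^{-1}\eta$. This is not literally \eqref{x13}, whose prefactors are $\det(v-V)$ and $\det(z-Z)$. Your prefactors are the ones that actually clear the poles: $T(z)$ has its simple (rank-one-residue) poles at $z=v_i$, killed by $\det(z-V)$, and $L(v)$ has its poles at $v=z_k$, killed by $\det(v-Z)$; with the prefactors as printed in \eqref{x13} the left side retains poles in $z$ and the right side retains poles in $v$, so that equation can only be meant with the arguments placed as in \eqref{xx2}--\eqref{xx3}, namely $\det(v-V)\det(z-T(v))=\det(z-Z)\det(v-L(z))$, which is your identity after the relabelling $z\leftrightarrow v$. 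You should say explicitly which naming convention you adopt and exhibit the relabelling, rather than asserting that your display ``is'' \eqref{x13} as printed.
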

We omit the proof here, since the infinite-dimensional limit of this duality will reproduce the finite-dimensional computations. For the details of this proof, see \cite{MMZZ, MMZZR1, PZ}.
\section{Large $N$ limit of spectral duality}
\setcounter{equation}{0}
In this section, we describe the $A_{\h}$ reduced Gaudin model, as well as its dual model and the spectral duality statement. 
\subsection{$A_{\h}$ rational reduced Gaudin model}
For the infinite-dimensional case, we construct the rational reduced Gaudin model only for the $"\mathrm{rank} = 1"$ variables. To this end, we consider $2M$ functions on the circle $\xi^{a}$,$\eta^{b} \in L^{1}(S^{1})$:
\beq\label{ag1}
\displaystyle{
	\xi^{a}(\theta) = \sum\limits_{m \in \mZ}\xi^{a}_{m}e^{im\theta}\,,
}
\eq
\beq\label{ag2}
\displaystyle{
	\eta^{b}(\theta) = \sum\limits_{m \in \mZ}\eta^{b}_{m}e^{im\theta}\,.
}
\eq
Let $V(\phi_{1}) \in C^{\infty}(\mathbb{T}^{2})((\h))$ be a function with matrix modes $v_{i}$. This means that in the $\mn$ it is represented by
\beq\label{ag3}
\displaystyle{
	\sum\limits_{i \in \mZ}v_{i}E_{ii}\,,
}
\eq
and 
\beq\label{ag4}
\displaystyle{
	V(\phi_{1}) = \frac{i}{4\pi \h}\sum\limits_{m,k \in \mZ}v_{k}\om^{km}e^{im\phi_{1}}\,.
}
\eq
For this formula to be consistent $|v_{k}|^{2}$ should decay faster than $(1+ k^{2})^{n}$ for any $n$. We demand now that the absolute values of the modes $\xi_{i}^{a},\eta_{j}^{b}$ decay faster than  $|v_{k}|^{2}$, to define the functions 
\beq\label{ag5}
\displaystyle{
	F_{V}[f,g](\vec{\phi}) = \sum\limits_{\vec{m} \in \mZ^{2}: m_{1} \neq m_{2}} \frac{f_{m_{1}}g_{m_{2}}}{v_{m_{1}} - v_{m_{2}}} e^{i\vec{m}\vec{\phi}}\,,
}
\eq
\beq\label{ag6}
\displaystyle{
	G_{V}[f,g](\vec{\phi}) = \sum\limits_{m \in \mZ} \Big(\sum\limits_{p \in \mZ: p\neq m}\frac{f_{p}g_{p}}{v_{m} - v_{p}}\Big)e^{i\vec{m}(\phi_{1}+\phi_{2})} \,.
}
\eq

With the help of these functions, we define
\beq\label{ag7}
\displaystyle{
	\{\xi^{a}(\phi_{1}),\eta^{b}(\phi_{2})\} = - \delta^{ab}\big(\delta(\phi_{1} + \phi_{2}) + G_{V}[\xi^{a},\eta^{b}](\vec{\phi})\big)\,,
}
\eq
\beq\label{ag8}
\displaystyle{
	\{\xi^{a}(\phi_{1}),\xi^{b}(\phi_{2})\} = F_{V}[\xi^{b},\xi^{a}](\vec{\phi})\,,
}
\eq
\beq\label{ag9}
\displaystyle{
	\{\eta^{a}(\phi_{1}),\eta^{b}(\phi_{2})\} = -F_{V}[\eta^{b},\eta^{a}](\vec{\phi})\,.
}
\eq
This means that for the modes $\xi_{i}^{a},\eta_{j}^{b}$ the Poisson bracket coincides with $(\ref{fg8}) - (\ref{fg10})$, so the Jacobi identity is fulfilled.

We follow \cite{PZ1} and define the $\rank = 1$ variables to be equal to
\beq\label{ag10}
\displaystyle{
	A^{[k]}(\vec{\phi}) =  \frac{i}{4\pi\h}\eta^{k}(\phi_{2}) \star \delta(\phi_{1}) \star \xi^{k}(-\phi_{2})\,.
}
\eq
The $\mn$ modes of $A^{k}$ are expressed in the same form as the finite-dimensional "$\mathrm{rank} = 1$" matrix 
\beq\label{ag11}
\displaystyle{
   M^{[k]}_{ij} = \xi^{k}_{i}\eta_{j}^{k}\,,
}
\eq
such that 
\beq\label{ag12}
\displaystyle{
	A^{[k]}_{\vec{m}} = \frac{i}{4\pi\h} \om^{\frac{m_{1}m_{2}}{2}}\sum\limits_{i\in \mZ}\om^{m_{1}i}\xi^{k}_{i}\eta^{k}_{i+m_{2}}\,,
}
\eq
\beq\label{ag13}
\displaystyle{
A^{[k]}(\vec{\phi}) = \sum\limits_{\vec{m} \in \mZ}	A^{[k]}_{\vec{m}}e^{i\vec{m}\vec{\phi}}\,.
}
\eq
Note that the matrix product of $M^{[a]}$ and$M^{[b]}$ is well defined, since
\beq\label{ag14}
\displaystyle{
	 \sum\limits_{k \in \mZ }M^{[a]}_{ik} M^{[b]}_{kj} = \xi^{a}_{i}\eta^{b}_{j}\sum\limits_{k \in \mZ}\eta^{a}_{k}\xi^{b}_{k} = \xi^{a}_{i}\eta^{b}_{j} \frac{1}{2\pi}\int\limits_{S^{1}}\mathrm{d}\theta\, \xi^{b}(\theta)\eta^{a}(-\theta)\,.
}
\eq
The homomorphism \eqref{n5} is still relevant, therefore we have
\beq\label{ag15}
\displaystyle{
A^{[a]} \star A^{[b]} (\vec{\phi}) = \sum\limits_{i,j}W_{\vec{m},ij}\sum\limits_{k \in \mZ }M^{[a]}_{ik} M^{[b]}_{kj}\,.	
}
\eq
The same argument is also applies to the trace:
\beq\label{ag16}
\displaystyle{
	\tr A^{[k]} = \frac{1}{4\pi^{2}}\int_{\mathbb{T}^{2}}\mathrm{d}\vec{\phi}\,A^{[k]}(\vec{\phi}) = \tr M^{[k]}\,.
}
\eq

Finally, the $A_{\h}$ rational reduced Gaudin model is defined by the Lax matrix:
\beq\label{ag}
\displaystyle{
	L(z,\vec{\phi}) = V(\phi_{1}) + \frac{i}{4\pi\h}\sum\limits_{k = 1}^{M}\frac{z_{k}\eta^{k}(\phi_{2}) \star \delta(\phi_{1}) \star \xi^{k}(-\phi_{2})}{z-z_{k}}\,.
}
\eq
In contrast with the finite-dimensional case, the integrability of the reduced model does not follow from the definition. We will prove its integrability below by the spectral duality statement.
\subsection{Dual model}
To construct the dual side, we define new variables 
\beq\label{dm1}
\displaystyle{
	\tilde{A}^{[k]}_{ab} = \frac{1}{4\pi^{2}}\int\limits_{\mathbb{T}^{2}}\mathrm{d}\, \vec{\phi}(V(\phi_{1}))^{k} \star \frac{i}{4\pi\h}\eta^{b}(\phi_{2})\star\delta(\phi_{1})\star \xi^{a}(-\phi_{2})\,, \quad a,b = 1,...,M\,,
}
\eq
or in the matrix modes
\beq\label{dm2}
\displaystyle{
	\tilde{A}^{[k]}_{ab} = \sum\limits_{i \in \mZ}v_{i}^{k}\xi^{a}_{i}\eta^{b}_{i}\,.	
}
\eq
The role of conserved quantities generator will be played 
\beq\label{dx1}
\displaystyle{
	\tilde{T}(v) = Z\Big(I+ \sum\limits_{k = 1}^{\infty}\frac{\tilde{A}^{[k-1]}}{v^{k}}\Big)\, \in \mathrm{Mat}_{M}[[v^{-1}]]\,.
}
\eq
The following proposition shows that \eqref{dx1} is an infinite-dimensional generalization of the XXX spin chain \eqref{s4}.
\begin{lemma}
	\beq\label{dx2}
	\displaystyle{
		\{\tilde{T}_{1}(z),\tilde{T}_{2}(w)\} = \Big[\frac{P_{12}}{z-w},\tilde{T}_{1}(z)\tilde{T}_{2}(w)\Big]\,.	
	}
	\eq
\end{lemma}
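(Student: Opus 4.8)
The plan is to reduce \eqref{dx2} to the finite-dimensional quadratic $r$-matrix relation \eqref{s3} by first summing the series \eqref{dx1} in closed form. Inserting \eqref{dm2} into \eqref{dx1} and performing the geometric summation over $k$ (interchanging the two sums, which I justify below), I obtain
\[ \tilde{T}_{ab}(v) = z_{a}\delta_{ab} + z_{a}\sum_{i \in \mZ}\frac{\xi^{a}_{i}\eta^{b}_{i}}{v - v_{i}}\,. \]
This is exactly the $\rank=1$ XXX monodromy matrix \eqref{s4}, \eqref{i6}, with the finite set of sites $\{1,\dots,N\}$ replaced by the lattice $\mZ$ of matrix modes and with residues $B^{i}_{ab}=\xi^{a}_{i}\eta^{b}_{i}$ governed by the very same Poisson structure \eqref{fg8}--\eqref{fg10}, now indexed by $i,j,p\in\mZ$. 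The statement thus reduces to the question of whether the finite $rTT$ computation survives the passage to an infinite index set.

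Next I would run the Poisson-bracket computation directly. Writing $\tilde{T}_{ab}(z)-z_{a}\delta_{ab}=z_{a}\sum_{i}\xi^{a}_{i}\eta^{b}_{i}/(z-v_{i})$ and expanding $\{\tilde{T}_{ab}(z),\tilde{T}_{cd}(w)\}$ by bilinearity and the Leibniz rule, each elementary bracket $\{\xi^{a}_{i}\eta^{b}_{i},\xi^{c}_{j}\eta^{d}_{j}\}$ splits into four terms fixed by \eqref{fg8}--\eqref{fg10}. Substituting these, the double sum over $i,j\in\mZ$ organizes into the same two families as in the finite model: the contact terms carrying $\delta^{ad}$ and $\delta^{cb}$ (from the $\xi$--$\eta$ bracket), which assemble the permutation operator $P_{12}=\sum_{a,b}E_{ab}\otimes E_{ba}$, and the rational terms carrying denominators $v_{i}-v_{j}$ together with the internal sums $\sum_{p\neq i}\xi_{p}\eta_{p}/(v_{i}-v_{p})$. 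The partial-fraction identity $\tfrac{1}{(z-v_{i})(w-v_{i})}=\tfrac{1}{z-w}\big(\tfrac{1}{w-v_{i}}-\tfrac{1}{z-v_{i}}\big)$ and the regrouping of the $(v_{i}-v_{j})^{-1}$ terms against the poles at $v_{i},v_{j}$ are precisely those used to derive \eqref{s3} from \eqref{s6}, and none of them sees the range of $i,j$. Term by term I therefore recover $\big[\tfrac{P_{12}}{z-w},\tilde{T}_{1}(z)\tilde{T}_{2}(w)\big]$, valid for $|z|,|w|$ large, i.e.\ as an identity of formal series in $z^{-1},w^{-1}$.

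The one genuinely new point, and the step I expect to be the main obstacle, is the justification of the infinite summations and rearrangements, so that this is more than a transcription of the finite case. Concretely I must verify: (i) the geometric resummation in the first paragraph converges and makes \eqref{dm2} absolutely convergent, which follows from the decay hypotheses imposed on the construction, namely that the modes $\xi^{a}_{i},\eta^{b}_{i}$ decay faster than $|v_{k}|^{2}$ while the $v_{k}$ decay faster than any power of $k$; (ii) the Leibniz expansion and the interchange of the outer $i,j$ sums with the inner $p$ sum (Fubini) are legitimate, which requires controlling the small-denominator factors $(v_{i}-v_{p})^{-1}$ using $v_{i}\neq v_{j}$ together with the rapid decay of the modes, so that the resulting triple sums converge absolutely; and (iii) the partial-fraction regroupings preserve the value of each series, again by absolute convergence. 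Once (i)--(iii) are secured, every remaining manipulation is identical to the finite-dimensional derivation of \eqref{s3}, with the tensor bookkeeping taking place entirely in the finite indices $a,b,c,d$ of $\mathrm{Mat}_{M}$; accordingly I would devote the bulk of the write-up to these convergence estimates and treat the algebra as a direct repetition of the finite computation.
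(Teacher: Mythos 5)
Your argument is correct and reaches the same conclusion, but by a genuinely different route than the paper. The paper never resums the series \eqref{dx1}: it matches coefficients of the inverse powers of the spectral parameters on both sides of \eqref{dx2}, reducing the claim to the two families of identities \eqref{p4}--\eqref{p5} among the moments $\tilde{A}^{[k]}$, then computes $\{\tilde{A}^{[k_1]}_1,\tilde{A}^{[k_2]}_2\}$ from the bracket \eqref{s6} (in the form \eqref{p2}) to obtain the closed algebra \eqref{p3}--\eqref{pp3}, and finishes with a one-line telescoping sum. You instead perform the geometric resummation $\sum_k v_i^{k-1}v^{-k}=(v-v_i)^{-1}$, identify $\tilde{T}(v)$ with the rank-one XXX monodromy matrix \eqref{s4} on the site lattice $\mZ$, and rerun the finite-dimensional partial-fraction computation. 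Both routes rest on the same input --- the mode brackets \eqref{fg8}--\eqref{fg10} with indices in $\mZ$ --- and both must rearrange infinite double sums containing the internal factors $\sum_{p\neq i}(v_i-v_p)^{-1}$, so neither escapes the convergence question; the paper simply leaves it implicit in the standing decay hypotheses of Section 4.1. What your version buys is the conceptual identification with an infinite-site spin chain and an explicit accounting of where analysis is required; what the paper's version buys is that it stays entirely inside formal power series (no resummation, no partial fractions in the spectral parameters), and after the single application of \eqref{p2} the small denominators cancel through the polynomial identity $(v_j^{k_2}-v_i^{k_2})/(v_i-v_j)=-\sum_{l=0}^{k_2-1}v_i^{k_2-1-l}v_j^{l}$, leaving only finite telescoping sums. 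One caveat on your item (ii): rapid decay of the modes together with $v_i\neq v_j$ does not by itself control $(v_i-v_p)^{-1}$ when the $v_i$ accumulate at $0$; a quantitative separation of the $v_i$ is needed. This gap, however, is inherited from the definition of $G_V$ in \eqref{ag6} and afflicts the paper's setup equally, so it is not a defect specific to your proof.
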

\begin{proof}
	Comparing coefficients before powers of  $z^{-1}$ and $w^{-1}$ we obtain two sets of relations
	\beq\label{p4}
	\displaystyle{
		\{\tilde{A}_{1}^{[0]},\tilde{A}^{[k-1]}_{2}\} = \Big[P_{12},\tilde{A}_{2}^{[k-1]}\Big]\,,\quad k \in \mathbb{N}\,,
	}
	\eq
	\beq\label{p5}
	\displaystyle{
		\{\tilde{A}_{1}^{[k_{1}]},\tilde{A}^{[k_{2}-1]}_{2}\} - \{\tilde{A}_{1}^{[k_{1}-1]},\tilde{A}^{[k_{2}]}_{2}\} = \Big[P_{12},\tilde{A}_{1}^{[k_{1}-1]}\tilde{A}_{2}^{[k_{2}-1]}\Big]\,,\quad k_{1},k_{2} \in \mathbb{N}\,.
	}
	\eq
	Our strategy is to use the bracket \eqref{s6}, which is equivalently written as

	\beq\label{p2}
	\displaystyle{
		\{B^{i}_{1},B^{j}_{2}\} =  \frac{1-\delta^{ij}}{v_{i}-v_{j}}\Big[P_{12},B^{i}_{1}B^{j}_{2}\Big] + \delta^{ij}\Big(\Big[P_{12},B_{2}\Big] - \sum\limits_{k \neq i }\frac{1}{v_{i}-v_{k}}\Big[P_{12},B^{i}_{1}B^{k}_{2}\Big]\Big)\,.
	}
	\eq
	and prove that $(\ref{p4})-(\ref{p5})$ hold. We combine it with the definition \eqref{dm2} to conclude that 
	\beq\label{p3}
	\displaystyle{
		\{\tilde{A}^{[0]}_{1},\tilde{A}^{[k]}_{2}\} = \Big[P_{12},\tilde{A}^{[k]}_{2}\Big] \,,\quad k \in \mathbb{N}\,,
	}
	\eq
	\beq\label{pp3}
	\displaystyle{
		\{\tilde{A}^{[k_{1}]}_{1},\tilde{A}^{[k_{2}]}_{2}\} = \Big[P_{12},\tilde{A}^{[k_{1}+k_{2}]}_{2}\Big] - \sum\limits_{l = 0}^{k_{2}-1}\Big[P_{12},\tilde{A}^{[k_{1}+k_{2}-1-l]}_{1}\tilde{A}^{[l]}_{2}\Big]\,, \quad k_{1},k_{2} \in \mathbb{N}\,.
	}
	\eq
	The relation \eqref{p4} obviously holds, then we obtain
	\beq\label{p6}
	\begin{array}{c}
		\displaystyle{
			\{\tilde{A}_{1}^{[k_{1}]},\tilde{A}^{[k_{2}-1]}_{2}\} - \{\tilde{A}_{1}^{[k_{1}-1]},\tilde{A}^{[k_{2}]}_{2}\} = \sum\limits_{l = 0}^{k_{2}-1}\Big[P_{12}\tilde{A}_{1}^{[k_{1}+k_{2}-2-l]}\tilde{A}_{2}^{[l]}\Big] - \sum\limits_{l = 0}^{k_{2}-2}\Big[P_{12}\tilde{A}_{1}^{[k_{1}+k_{2}-2-l]}\tilde{A}_{2}^{[l]}\Big]  = 
		}
		\\ \ \\ 
		\displaystyle{
			= \Big[P_{12},\tilde{A}_{1}^{[k_{1}-1]}\tilde{A}_{2}^{[k_{2}-1]}\Big]\,,
		}
	\end{array}
	\eq
	which proves \eqref{dx2}.
\end{proof}

\subsection{Statement of duality}
For the $A_{\h}$ model, we define the spectral power series to be
\beq\label{x14}
\displaystyle{
	\Gamma_{\infty}(v,z) = \exp \tr \ln_{\star}\Big(1 - \frac{1}{v}L(z,\vec{\phi})\Big)\,,
}
\eq
with
\beq\label{x15}
\displaystyle{
	\ln_{\star}\Big(1 - \frac{1}{v}L(z,\vec{\phi})\Big) = -\sum\limits_{k = 1}^{\infty}\frac{\tr L^{\star k}(z)}{k v^{k}}\,,
}
\eq
such that $\Gamma_{\infty}(v,z)$ is a formal power series in $v^{-1}$. Its coefficients are functions of $z$ and generate involutive Hamiltonians. The spectral duality statement should relate $\Gamma_{\infty}(v,z)$  of the $A_{\h}$ model \eqref{ag} with the spectral curve of the dual model \eqref{dx1}. The following theorem provides an infinite-dimensional analogue of the determinant relation \eqref{x13}.
\begin{theorem}\label{lsg2}
	\beq\label{sg11}
	\displaystyle{
		\exp\Big(\tr \ln_{\star}\Big(I-\frac{V(\phi_{1})}{v}\Big)\Big)\det(z-\tilde{T}(v)) = \det(z-Z)	\exp\Big(\tr \ln_{\star}\Big(I-\frac{1}{v}L(z,\vec{\phi})\Big)\Big)\,.
	}
	\eq

\end{theorem}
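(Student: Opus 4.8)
\noindent\emph{Proof strategy.} The plan is to transport the whole identity to the matrix-mode ($\mn$) picture and there reduce it to the finite block-matrix computation that underlies the finite-dimensional duality of Theorem~\ref{lx1}. Since the representation \eqref{n4}--\eqref{n5} is an \emph{exact} algebra embedding $A_{\h}\hookrightarrow\mn$, under which the star product becomes ordinary (infinite) matrix multiplication and the trace \eqref{n11} becomes the matrix trace $\sum_{i}(\cdot)_{ii}$, the series \eqref{x15} gives $\exp\tr\ln_{\star}(I-L(z,\vec\phi)/v)=\exp\tr\ln(I-L(z)/v)$, i.e.\ the regularized determinant $\det(I-L(z)/v)$ of the infinite matrix with modes $L_{ij}(z)=v_{i}\delta_{ij}+\sum_{k}z_{k}\xi^{k}_{i}\eta^{k}_{j}/(z-z_{k})$; likewise the left factor is $\exp\tr\ln_{\star}(I-V(\phi_{1})/v)=\prod_{i}(1-v_{i}/v)=\det(I-V/v)$. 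Writing $V=\mathrm{diag}(v_{i})_{i\in\mZ}$, letting $\Xi$ be the $\mZ\times\{1,\dots,M\}$ matrix $\Xi_{ia}=\xi^{a}_{i}$ and $\Theta$ the $\{1,\dots,M\}\times\mZ$ matrix $\Theta_{aj}=\eta^{a}_{j}$, one has the rank-$M$ form $L(z)=V+\Xi Z(z-Z)^{-1}\Theta$.

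First I would factor $I-L(z)/v=(I-V/v)\bigl(I-(I-V/v)^{-1}\tfrac1v\Xi Z(z-Z)^{-1}\Theta\bigr)$, legitimate as a formal power series in $v^{-1}$ since $(I-V/v)^{-1}$ is diagonal with entries $v/(v-v_{i})$. Exponentiating and using multiplicativity of the regularized determinant $\exp\tr\ln_{\star}=\det$, the product factorizes; the second factor is the determinant of the identity plus a rank-$\le M$ operator, to which I would apply the infinite-dimensional Sylvester (Weinstein--Aronszajn) reduction $\exp\tr\ln(I-\Xi'\Theta')=\det\nolimits_{M}(I_{M}-\Theta'\Xi')$ for $\Xi'$ of size $\mZ\times M$ and $\Theta'$ of size $M\times\mZ$. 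With $\Xi'=(I-V/v)^{-1}\Xi$ and $\Theta'=\tfrac1v Z(z-Z)^{-1}\Theta$ and the simplification $\tfrac1v\Theta(I-V/v)^{-1}=\Theta(v-V)^{-1}$, this yields
\[
\exp\tr\ln_{\star}\!\bigl(I-\tfrac1v L(z,\vec\phi)\bigr)=\exp\tr\ln_{\star}\!\bigl(I-\tfrac1v V(\phi_{1})\bigr)\cdot\det\nolimits_{M}\!\bigl(I_{M}-Z(z-Z)^{-1}\Theta(v-V)^{-1}\Xi\bigr).
\]

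It then remains to match the finite factor with the dual spectral curve. Multiplying by $\det(z-Z)$ and pulling it inside the $M\times M$ determinant (using that $Z$ and $z-Z$ are diagonal, hence commute),
\[
\det(z-Z)\,\det\nolimits_{M}\!\bigl(I_{M}-Z(z-Z)^{-1}\Theta(v-V)^{-1}\Xi\bigr)=\det\!\bigl(z-Z(I+\Theta(v-V)^{-1}\Xi)\bigr).
\]
Here the entries of $Z(I+\Theta(v-V)^{-1}\Xi)$ are $z_{a}(\delta_{ab}+\sum_{i}\eta^{a}_{i}\xi^{b}_{i}/(v-v_{i}))$, which is the transpose-type rearrangement of $\tilde T(v)$: by the geometric resummation $\sum_{k\ge1}\tilde A^{[k-1]}_{ab}/v^{k}=\sum_{i}\xi^{a}_{i}\eta^{b}_{i}/(v-v_{i})$ reconstructing \eqref{dx1} from \eqref{dm2}, and the similarity $(I+C)Z\sim Z(I+C)$ which renders the ordering and the $\xi\leftrightarrow\eta$ transpose immaterial inside the determinant, the right-hand side equals $\det(z-\tilde T(v))$. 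Combining with the previous display reproduces \eqref{sg11} exactly.

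The main obstacle is the Sylvester reduction in the $A_{\h}$ setting. One must verify that the star-traces $\tr\bigl((\Xi'\Theta')^{\star n}\bigr)$ genuinely collapse to the finite traces $\tr_{M}\bigl((\Theta'\Xi')^{n}\bigr)$ with no residual $\h$-corrections; this is precisely the content of the homomorphism and trace identities \eqref{ag14}--\eqref{ag16}, i.e.\ of the exactness of $A_{\h}\hookrightarrow\mn$, which lets the entire computation be carried out with ordinary infinite matrices. The remaining analytic points are the convergence of all intermediate sums $\sum_{i}v_{i}^{a}\xi^{b}_{i}\eta^{c}_{i}$ and $\sum_{i}v_{i}^{a}$, guaranteed by the decay hypotheses imposed before \eqref{ag5}, and the validity of the multiplicativity of $\exp\tr\ln_{\star}$ and of the resummation into $\ln\det_{M}$ as identities of formal power series in $v^{-1}$.
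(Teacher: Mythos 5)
Your argument is correct and is essentially the paper's proof run in the opposite direction: the paper expands $\det(z-\tilde{T}(v))/\det(z-Z)$ as an exp--tr--log and swaps the $M\times M$ trace for the infinite-dimensional one term by term in the matrix modes (which is exactly your Weinstein--Aronszajn reduction, done by explicit index manipulation), then recombines with the $V$-factor via the same multiplicativity identity \eqref{rsd22} that you use to split $I-\frac{1}{v}L$. The only point you make explicit that the paper glosses over is the $\xi\leftrightarrow\eta$ transposition in identifying the finite factor with $\det(z-\tilde{T}(v))$, which you correctly dispose of via transpose invariance and the similarity $Z(I+C)\sim(I+C)Z$.
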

\begin{proof}
We follow \cite{PZ1} and carry out the proof in the matrix modes. Let us first note that 
\beq\label{psg1}
\displaystyle{
	\det(I - D) = \exp\tr\ln(1-D)\,.
}
\eq
Thus, $\det(I - D)$ may be expanded into powers of $\tr(D^{k})$, whose coefficients do not depend on the shape of $D$. If we assume that $D$ is an infinite-dimensional matrix, this power series will coincide with \eqref{x14}. 
We apply this fact to 
\beq\label{psg2}
\displaystyle{
	\det(z-Z)^{-1}\det(z-\tilde{T}(v)) = \det\Big( I - (z-Z)^{-1}Z\sum\limits_{k = 1}^{\infty}\frac{\tilde{A}^{k-1}}{v^{k}}\Big).
}
\eq
Then
	\beq\label{sg12}
	\begin{array}{c}
		\displaystyle{
			\tr \Big((z-Z)^{-1}Z\sum\limits_{k = 1}^{\infty}\frac{\tilde{A}^{k-1}}{v^{k}}\Big)^{n} =
		}
		\\ \ \\
		\displaystyle{
			=
			\sum\limits_{k_{1},...,k_{n} = 1}^{\infty}\frac{1}{v^{k_{1}+...+k_{n}}}\sum\limits_{p_{1},...,p_{n} \in \mZ}
			\sum\limits_{a_{1},...,a_{n} = 1}^{M}
			\frac{z_{a_{1}}v_{p_{1}}^{k_{1}-1}\xi^{a_{1}}_{p_{1}}\eta^{a_{2}}_{p_{1}}}{z-z_{a_{1}}}...
			\frac{z_{a_{n}}v_{p_{n}}^{k_{n}-1}\xi^{a_{n}}_{p_{n}}\eta^{a_{1}}_{p_{n}}}{z-z_{a_{n}}} =
		}
		\\ \ \\
		\displaystyle{
			=\sum\limits_{k_{1},...,k_{n} = 1}^{\infty}\frac{1}{v^{k_{1}+...+k_{n}}}\sum\limits_{p_{1},...,p_{n} \in \mZ}\sum\limits_{a_{1},...,a_{n} = 1}^{M} \frac{z_{a_{1}}v_{p_{1}}^{k_{1}-1}\xi^{a_{1}}_{p_{1}}\eta^{a_{1}}_{p_{n}}}{z-z_{a_{1}}}...\frac{z_{a_{2}}v_{p_{2}}^{k_{2}-1}\xi^{a_{2}}_{p_{2}}\eta^{a_{2}}_{p_{1}}}{z-z_{a_{2}}}\,.}
	\end{array}
	\eq
	The last line corresponds to the trace of the product of the infinite matrices. We express them as functions
	\beq\label{s121}
	\displaystyle{
		\tr \Big((z-Z)^{-1}Z\sum\limits_{k = 1}^{\infty}\frac{\tilde{A}^{k-1}}{v^{k}}\Big)^{n}=\tr\Big(\sum\limits_{k =1}^{\infty}\frac{(V(\phi_{1}))^{k-1}}{v^{k}}\star \frac{i}{4 \pi \h}\sum\limits_{a = 1}^{M}\frac{z_{a}\eta^{a}(\phi_{2})\star \delta(\phi_{1}) \star \xi^{a}(-\phi_{2}) }{z-z_{a}}\Big)^{\star n}\,.}	
	\eq
This means that

	\beq\label{sg13}
	\begin{array}{c}
		\displaystyle{
			\exp\Big(\tr \ln_{\star}\Big(I-\frac{V(\phi_{1})}{v}\Big)\Big)\det(z-\tilde{T}(v)) = \det(z-Z) \exp\Big(\tr \ln_{\star}\Big(I-\frac{V(\phi_{1})}{v}\Big)+ 
		}
		\\ \ \\
		\displaystyle{
			+\tr \ln_{\star}\Big(I-\sum\limits_{k =1}^{\infty}\frac{(V(\phi_{1}))^{k-1}}{v^{k}}\star \frac{i}{4 \pi \h}\sum\limits_{a = 1}^{M}\frac{z_{a}\eta^{a}(\phi_{2})\star \delta(\phi_{1}) \star \xi^{a}(-\phi_{2}) }{z-z_{a}}\Big)\Big)	\,.
		}
	\end{array}
	\eq
	Finally, we use the identity
	\beq\label{rsd22}
	\displaystyle{
		\tr\ln_{\star}(I - A)\star(I - B) = \tr\ln_{\star}(I - A)+ \tr\ln_{\star}(I - B)\,,
	}
	\eq
    to rewrite the expressions inside the exponent  
	\beq\label{sg14}
	\begin{array}{c}
		\displaystyle{
			\tr \ln_{\star}\Big(I-\frac{V(\phi_{1})}{v}\Big)
			+\tr \ln_{\star}\Big(I-\sum\limits_{k =1}^{\infty}\frac{(V(\phi_{1}))^{k-1}}{v^{k}}\star \frac{i}{4 \pi \h}\sum\limits_{a = 1}^{M}\frac{z_{a}\eta^{a}(\phi_{2})\star \delta(\phi_{1}) \star \xi^{a}(-\phi_{2}) }{z-z_{a}}\Big) = 
		}
		\\ \ \\
		\displaystyle{
			=  \tr \ln_{\star}\Big(\Big[I-\frac{V(\phi_{1})}{v}\Big]\star\Big[I-\sum\limits_{k =1}^{\infty}\frac{(V(\phi_{1}))^{k-1}}{v^{k}}\star \frac{i}{4 \pi \h}\sum\limits_{a = 1}^{M}\frac{z_{a}\eta^{a}(\phi_{2})\star \delta(\phi_{1}) \star \xi^{a}(-\phi_{2}) }{z-z_{a}}\Big]\Big) = 
		}
		
		\\ \ \\
		\displaystyle{
			=  I-\frac{1}{v}L(z,\vec{\phi})	\,,
		}
		
	\end{array}
	\eq
	which finishes the proof.
\end{proof}\\

As a corollary of this theorem, we have the integrability  of  the $A_{\h}$ Gaudin model \eqref{ag}. Indeed, the coefficients of the spectral power series \eqref{x14} Poisson commute with respect to the brackets $(\ref{ag7}) - (\ref{ag9})$, since they are expressed via coefficients of the monodromy matrix \eqref{dx1} spectral curve. 

Note that the infinite-dimensional limit of the spin chain \eqref{dx1} has the same analytic structure and Poisson brackets as those expected for a twisted 1+1 field theory monodromy matrix \cite{FT}. There is no direct evidence that the variables \ref{dm1} can be expressed as integrals of 1+1 fields (as the residues of the monodromy matrix should be). However, we hope that in future studies this theory will be directly connected with some 1+1 soliton equation.

\subsection*{Acknowledgments}
This work was performed at the Steklov International Mathematical Center and supported by the
Ministry of Science and Higher Education of the Russian Federation (agreement no. 075-15-2025-303). The author would like to thank Andrei Zotov for fruitful discussions and ideas.

\begin{small}

\end{small}

\end{document}